\newtheorem{theorem}{\bf Theorem}
\newtheorem{definition}{\bf Definition}
\newcommand{\INDSTATE}[1][1]{\STATE\hspace{#1\algorithmicindent}}
\begin{document}

\title{Prioritizing Consumers in Smart Grid: Energy Management Using Game Theory}

\author{\IEEEauthorblockN{Wayes Tushar\IEEEauthorrefmark{1}\IEEEauthorrefmark{4},
Jian A. Zhang\IEEEauthorrefmark{3}, David B.
Smith\IEEEauthorrefmark{4}\IEEEauthorrefmark{1}, Sylvie
Thiebaux\IEEEauthorrefmark{4}\IEEEauthorrefmark{1} and H.~Vincent~Poor\IEEEauthorrefmark{2}}
\IEEEauthorblockA{\IEEEauthorrefmark{1}College of Engineering and
Computer Science,~Australian National University, ACT, Australia. Email:
wayes.tushar@anu.edu.au\\\IEEEauthorrefmark{3}CSIRO
ICT Center, Marsfield, NSW, Australia. Email: andrew.zhang@csiro.au\\
\IEEEauthorrefmark{4}National ICT Australia
(NICTA),~Canberra,~ACT,~Australia. Email: \{david.smith,
sylvie.thiebaux\}@nicta.com.au\\\IEEEauthorrefmark{2}School of
Engineering and Applied Science, Princeton University, Princeton,
NJ, USA. Email: poor@princeton.edu}
\thanks{This work is supported by NICTA. NICTA is funded by the Australian Government as
represented by the Department of Broadband, Communications and the
Digital Economy and the Australian Research Council through the ICT
Centre of Excellence program.This work is also supported in part by the U. S. Air Force Office of Scientific Research under MURI Grant FA9550-09-1-0643 .}
}
\date{}
\maketitle
\def\baselinestretch{.916}
\begin{abstract}
This paper explores an idea of demand-supply balance for smart grids
in which consumers are expected to play a significant role. The main
objective is to \emph{motivate} the consumer, by \emph{maximizing}
their benefit both as a seller and a buyer, to trade their surplus
energy with the grid so as to balance the demand at the peak hour.
To that end, a Stackelberg game is proposed to capture the
interactions between the grid and consumers, and it is
shown analytically that optimal energy trading parameters that maximize
customers' utilities are obtained at the solution of the game. A
novel distributed algorithm is proposed to reach the optimal
solution of the game, and numerical examples are used to assess the
properties and effectiveness of the proposed approach.
\end{abstract}
\begin{IEEEkeywords}
Smart grid, two-way communication, demand management, Stackelberg game,
consumer's benefit, variational equilibrium.
\end{IEEEkeywords}

\IEEEpeerreviewmaketitle
\section{Introduction}\label{introduction}
The smart grid (SG) is envisioned to be a large-scale next
generation cyber-physical system that will improve the efficiency,
reliability, and robustness of future power and energy grids by
integrating the consumers as one of its key management
components~\cite{2011IEEE-JCST_Fang}, and thus, achieve a system
which is clean, safe, reliable, resilient and sustainable. This
heterogeneous network will motivate the adoption of advanced
technologies that will increase the participation of its consumers
to overcome various technical challenges at different levels of
demand-supply balance~\cite{2011IEEE-JCST_Fang}.

In this respect, game theory, which is an analytical framework to
capture the complex interactions among rational
players~\cite{1999Book_Dynamicgame-Basar} is studied in this paper
to model an energy trading scheme for the SG. The model uses the
two-way communication facility of the SG~\cite{2010IEEE-JTSG_Rad},
and inspires the customers to \emph{spontaneously} take part in
\emph{supplying} their surplus energy (SE) to the grid so as to
assist the power grid (PG) in balancing the excess energy demand at
the peak hour. This voluntary participation of consumers in energy
trading is very important in the context of SG because of its
ability to greatly enhance the SG's reliability, and thus, improve
the social benefit of the electricity
market~\cite{2010-JEnergy_Walawalkar}. We use the framework of a
Stackelberg game~\cite{1999Book_Dynamicgame-Basar} for this model in
which the PG is considered as the leader and energy users (EUs) are
the followers. Here, on the one hand, the PG decides on the total
amount of energy it wants to buy, and also on the price per unit of
energy it needs to pay to each EU. On the other hand, the EU decides
on its amount of energy to be sold to the PG in response to the
price offered to it.

We note that energy management in the context of SG has been
receiving considerable attention recently. For example, energy
management for SGs in a vehicle-to-grid (V2G) scenario have been
studied in
\cite{2011IEEE-JCST_Fang,2010IEEE-JTSG_Rad,2012IEEE-JTSG_Wu}
and the references therein, whereby the application of game theory
for demand-supply balance in SGs can be found
in~\cite{2010IEEE-JTSG_Rad} and \cite{2012IEEE-JTSG_Wu}. However,
little has been done in prioritizing the consumers' benefit in
management modeling where the main priority of the energy management
scheme is to benefit the consumers. We stress that consumers are the
core element of the evolution of SG as explained in
\cite{2011IEEE-ISGT_Liu}, and hence, their benefit is one of the
most important concerns of any demand-supply modeling scheme. In
this respect, we propose an energy management scheme that
prioritizes the consumers in the SG and balances demand with supply
at peak hours. We first formulate a noncooperative \emph{Stackelberg}
game (NSG) to study the interactions between the PG and EUs in the
SG, and show that the optimal demand-supply balance can be achieved
at the solution of the game; then we analyze properties of the game
in terms of existence and optimality, and it is shown that the game
possesses a socially optimal solution; finally, we propose a
distributed algorithm to reach the solution of the game, and the
effectiveness of the proposed scheme is demonstrated via numerical
experiments.
\section{System Model}\label{system-model}
Consider an SG network that consists of a single PG and a number of
EUs. The set of EUs is $\mathcal{N}$, where $|\mathcal{N}|=N$.
Here, the PG refers to the main electricity grid which is servicing a
group of customers at peak hours of the day (i.e., $12$ pm to $4$
pm), and each EU $i\in\mathcal{N}$ is a group of similar idle energy
users~\cite{2011IEEE-JCST_Fang}, connected via an
aggregator~\cite{2010IEEE-JTSG_Sekyung}, such as smart homes,
electric vehicles, wind mills, solar panels and bio-gas plants that
have some SE for sale after regular usage. It is assumed that the PG
can communicate with the EUs through smart meters via an appropriate
communication protocol~\cite{2010IEEE-JTSG_Rad}. \vspace{-0.05cm}

Due to frequent change of energy state in the grid, energy
management in the SG needs to be carried out
frequently~\cite{2012IEEE-JTSG_Wu}, and therefore, the total peak
hour duration can be divided into multiple time
slots~\cite{2012IEEE-JTSG_Wu}. As energy demands by the customers
are very high during the peak hours, the PG may be unable to balance some
of the demands from its own generation in some of these time slots.
Meanwhile, the PG needs to buy energy from alternative energy
sources such as idle EUs who have SE, and may agree to sell it to
the PG with appropriate incentives. For the rest of the paper, we
will concentrate on the energy management in a single time slot. It
is assumed that the energy deficiency of the PG, $E_{\text{PG}}$, at
any time slot is fixed although the deficiency may vary from one
time slot to the next. However, as the required energy by the PG is
fixed during a time slot, the PG would not be interested in buying
more energy than $E_{\text{PG}}$  to keep its cost at a minimum.
Thus, if each EU $i$ with SE $E_i$ provides the PG with energy
$x_i$, these quantities need to satisfy
\begin{eqnarray}
\sum_i x_i\leq E_{\text{PG}};~x_i\leq E_i,~\forall
i.\label{constraint-1}
\end{eqnarray}
\vspace{-0.05cm}
To buy an offered amount of energy $x_i$, the PG pays a price $p_i$ per unit
of energy to EU $i$ as an incentive. However, the PG may need to pay
different incentives to different EUs due to their different amounts of SE. For
instance, a lower incentive may not affect the intended revenue of
an EU with higher SE as it can sell more, but could severely affect
the revenue of EUs with smaller amounts of SE. Moreover, the PG may also want to
minimize its total cost of purchase as it would further enable the
PG to sell this energy at a cheaper rate to its customer. This
would facilitate the trading of energy between the PG and the EUs in
the network rather than establishing more expensive generators or
bulk capacitors to meet any excess demand, and also, the cheaper
rate would benefit the consumers who buy the energy from the PG. To this
end, we assume that the PG estimates a total price $P_r$ per unit of
energy, analogous to the \emph{total cost per unit production} in
economics~\cite{2010-Book_Farris}, in each time slot using a real
time price estimation technique as proposed in
\cite{2008IEEE-JTPS_Yun}. The PG uses this $P_r$ to optimize the
price $p_i$ it will pay to each EU $i$ to order to minimize its total cost while
maintaining the constraint
\begin{eqnarray}
\sum_i p_i=P_r;~p^{\text{min}}\leq p_i\leq
p^{\text{max}}.\label{constraint-2}
\end{eqnarray}
The equality constraint in \eqref{constraint-2}  establishes that
the announced total price per unit energy must be paid to all EUs,
and thus motivates the EUs to take part in energy trading with the
PG. Here, $p^{\text{min}}$ is the minimum price that the PG needs to
pay EU $i$ to incentivize it to trade energy, and $p^{\text{max}}$ is the
maximum price that the PG can pay. Although $P_r$ is fixed, $p_i$
can be different for different $i$ based on $x_i$.
\section{Stackelberg Game and Its Properties}\label{game-formulation}
In a consumer-prioritized SG, the beneficiaries of the energy
management scheme are the consumers in the
network~\cite{2011IEEE-ISGT_Liu}. In this regard, we propose an NSG,
in which on the one hand, the objective of each EU $i$ is to
voluntarily sell an amount of energy $x_i$ to the PG based on $E_i$ and the
offered price $p_i$ so as to maximize its own utility. On the other
hand, the PG wants to minimize its total cost of purchase by
optimizing $p_i$ for different $i$ as explained
in Section~\ref{system-model}. To this end, we now define the
objective functions of the leader and followers of the game.
\subsection{Objective functions of the EUs and the PG}
The considered utility function of EU $i$, $U(x_i,E_i,p_i)$, is based on a
linearly decreasing marginal benefit, which is appropriate for
energy users~\cite{2010IEEE-CSmartGridComm_Samadi}. In addition, the
utility function is also assumed to possess the following
properties: i) the utility of EU $i$ increases with the amount of
SE $E_i$, i.e., $\frac{\delta U}{\delta E_i}>0$; and ii) the
utility of an EU increases as $p_i$ increases, i.e., $\frac{\delta
U}{\delta p_i}>0$. To meet the above properties, in this work we
consider the following utility function for EU $i$:
\begin{equation}
U(x_i,E_i,p_i)=E_ix_i-\frac{1}{2}{x_i}^2+p_ix_i.\label{utility-EU-single}
\end{equation}
From \eqref{utility-EU-single}, we note that the addition of the
quantity $\sum_{j\neq i}E_jx_j-\frac{1}{2}{x_j}^2+p_jx_j$ to $U$ in
\eqref{utility-EU-single}  does not affect the solution.
Consequently, all the EUs equivalently maximize
\begin{equation}
\hat{U}(\mathbf{x},\mathbf{E},\mathbf{p})=\sum_{i=1}^N
U(x_i,E_i,p_i),\label{utility-joint}
\end{equation}
subject to the constraint $\sum_ix_i\leq E_{\text{PG}},~\forall i$.
Here, $\mathbf{x}=[x_1, x_2, \hdots, x_N]^T$, $\mathbf{E}=[E_1, E_2,
\hdots, E_N]^T$ and $\mathbf{p}=[p_1, p_2, \hdots, p_N]^T$.

On the other hand, the PG's target is to decide on its price $p_i$
based on the offered energy $x_i$ by EU $i$ so as to minimize the
total cost of purchase $\tilde{J}(\mathbf{p})=\sum_i J(p_i)$. Here,
$J(p_i)$ is the individual cost including the cost of purchasing
with price $p_i$ and other associated costs. For $J(p_i)$, we
consider a convex cost function
\begin{equation}
J(p_i)=x_i{p_i}^2+a_ip_i+b_i;~a_i,b_i>0;~\forall i\in\mathcal{N},\label{cost-function}
\end{equation}
which is comparable to the practical cost function of some utility
companies~\cite{2010IEEE-JTSG_Rad}. In \eqref{cost-function}, the
first term captures the cost of purchasing energy, whereas the
associated costs\footnote{Examples include transmission and
artificial tariff costs.} are reflected in the last two terms. In
\eqref{cost-function}, $J$ possesses the following properties: i)
$J$ increases with the increase of $p_i$. That is
$J(\hat{p}_i)<J(\acute{p}_i),~\forall \hat{p}_i<\acute{p}_i$; and
ii) $J$ is strictly convex. Thus, for $\hat{p}_i,\acute{p}_i\geq 0,
\forall i$ and any real number $0<\gamma<1$,
$J(\gamma\hat{p}_i+(1-\gamma)\acute{p}_i)<\gamma
J(\hat{p}_i)+(1-\gamma)J(\acute{p}_i)$. The objective of the PG is
to minimize its total cost, and thus, the net cost function of the
PG is
\begin{eqnarray}
\tilde{J}(\mathbf{p})=\sum_{i=1}^{N}
\left(x_i{p_i}^2+a_ip_i+b_i\right);\label{net-cost}\\\text{s.t.},
\sum_ip_i=P_r;~p^{\text{min}}\leq p_i\leq p^{\text{max}}~\forall
i.\nonumber
\end{eqnarray}
As EUs are owned by individual consumers, the PG cannot directly
control their decision making processes, and hence, a decentralized
control mechanism is required for both EUs and the PG to decide on
$x_i$ and $p_i$ by optimizing their respective utility and cost
functions in \eqref{utility-joint} and \eqref{net-cost}.
\subsection{Formulation of the game}
The PG and EUs interact with each other to decide on their energy
trading parameters $p_i$ and $x_i$, and we propose an NSG\footnote{A similar form of game for charging of electric vehicles in smart grid was used in \cite{2012IEEE-JTSG_Tushar}.} $\Lambda$ to capture this
interaction. In this game, the PG is the leader who decides on the
price $p_i$ for the amount of energy offered by the EU $i$, and each
EU $i$ is a follower who agrees on $x_i$ to be offered to the PG in
response to $p_i$ by playing a generalized Nash game
(GNG)~\cite{2011-CWWWC_Arganda} with other followers in the SG.
Hence, the proposed NSG can be defined by its strategic form as
\begin{equation}
\Lambda=\lbrace(\mathcal{N}\cup\{\text{PG}\}),\{\mathbf{X}_i\}_{i\in\mathcal{N}},\hat{U},\tilde{J},\mathbf{p}\rbrace,\label{NSG}
\end{equation}
where $(\mathcal{N}\cup\{\text{PG}\})$ is the set of all players in
the game, and $\{\mathbf{X}_i\}_{i\in\mathcal{N}}$ is the vector of
strategies of EU $i$ satisfying \eqref{constraint-1}.

In this game, an EU's decision is affected by the strategies of
other EUs through \eqref{constraint-1}, and thus, the GNG amongst
EUs, to decide on $x_i~\forall i$, is a jointly convex generalized
Nash equilibrium problem (GNEP)~\cite{2011-CWWWC_Arganda}, whose
solution is a generalized Nash equilibrium (GNE). The proposed game
is played in two steps. In the first step, the game is initiated
with the announcement of the price $p_i=p,~\forall i$ satisfying
\eqref{constraint-2} by the PG, upon which the EUs play a GNEP to
decide on the GNE energy set $\mathbf{x}$ they wish to sell to the
PG. The PG receives the offered energy $\mathbf{x}$ and thereby
obtains some insight into the capacity\footnote{For a similar price
$p$, each EU receives a similar incentive, and thus their offered
energies reflect their capacities of supply.} of each EU $i$. In the
second step, the PG optimizes its price vector to
$\mathbf{p^*}=[{p_1}^*,\hdots,{p_N}^*]^T$, by a constraint
optimization technique for the offered $\mathbf{x}$. Thereafter, the
EUs again decide on their GNE energy vector
$\mathbf{x}^*=[{x_1}^*,\hdots,{x_N}^*]^T$ in response to
$\mathbf{p}^*$ and the proposed NSG reaches its noncooperative
Stackelberg equilibrium (NSE).
\begin{definition}
\label{definition-1} Consider the NSG $\Lambda$ in \eqref{NSG}, in
which $\hat{U}$ and $\tilde{J}$ are defined by \eqref{utility-joint} and
\eqref{net-cost} respectively. A set of strategies
$\left(\mathbf{x}^*, \mathbf{p}^*\right)$ constitutes the NSE of the
game if and only if the strategy set satisfies the following set of
inequalities:
\begin{align}
\hat{U}({x_i}^*,{\mathbf{x}_{-i}}^*,\mathbf{E},{\mathbf{p}}^*)\geq
\hat{U}({x_i},{\mathbf{x}_{-i}}^*,\mathbf{E},{\mathbf{p}}^*),\label{SE_cond-1}\\\forall
x_i\in\mathbf{x},~\forall i\in\mathcal{N},~\sum_ix_i\leq
E_{\text{PG}},\nonumber
\end{align}
and
\begin{align}
\tilde{J}({p_i}^*,\mathbf{p}_{-i}^*)\leq \tilde{J}({p_i},\mathbf{p}_{-i}^*),\label{SE_cond-2}\\\forall i\in\mathcal{N},~\forall p_i\in\mathbf{p},~p^{\text{min}}\leq p_i\leq p^{\text{max}}\nonumber.
\end{align}
Here, $\mathbf{x}_{-i}$ is the energy vector of all EUs in the set
$\mathcal{N}$ except EU $i$, and similarly, $\mathbf{p}_{-i}$ is the
price vector set by the PG for all EUs in $\mathcal{N}$ excluding
$i$.
\end{definition}

Therefore, at the NSE, neither any EU nor the PG can increase its utility by deviating from its NSE strategy while all other players in the SG are playing their NSE strategies.
\subsection{Existence and optimality}
In this section, we investigate the existence of a \emph{socially
optimal} solution which is beneficial for all the consumers in the
SG, and thus, suitable for the proposed consumer-prioritized energy
management scheme. In this regard, first we note that the proposed
NSG reaches the NSE as soon as all EUs agree on a GNE energy vector
$\mathbf{x}^*$ in response to the optimized price vector
$\mathbf{p}^*$ set by the PG. Due to the strict convexity of
\eqref{net-cost}, there always exists a unique
solution~\cite{2004-Book_BOYD} for the price vector $\mathbf{p}^*$,
and thus, the existence of a socially optimal GNE of the
\emph{followers' GNEP} would \emph{guarantee} the existence of a
socially optimal NSE of the \emph{proposed NSG}.
\begin{theorem}
\label{theorem-1}
There exists a socially optimal GNE solution for the GNEP amongst the EUs in response to the price set by the PG.
\end{theorem}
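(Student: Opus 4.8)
The plan is to exploit the special structure of the followers' GNEP: every EU maximizes the \emph{same} objective $\hat{U}$ of \eqref{utility-joint}, and the only coupling among the EUs is the single shared constraint $\sum_i x_i\le E_{\text{PG}}$ in \eqref{constraint-1}, while $0\le x_i\le E_i$ (nonnegativity together with the surplus-energy bound) is private to EU $i$. Hence the GNEP is \emph{jointly convex} and is in fact an exact potential game with potential $\hat{U}$. I would therefore prove the theorem by constructing the variational equilibrium (VE) of this GNEP and showing that it is simultaneously a GNE and socially optimal for the EUs.

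First I would verify the structural hypotheses. The map $\mathbf{x}\mapsto\hat{U}(\mathbf{x},\mathbf{E},\mathbf{p})$ is continuous and, being separable with Hessian $-\mathbf{I}_N$, strictly concave. The joint feasible set $\mathcal{K}=\{\mathbf{x}:0\le x_i\le E_i~\forall i,~\sum_i x_i\le E_{\text{PG}}\}$ is nonempty (it contains $\mathbf{0}$), closed, bounded and convex. By the Weierstrass theorem the program $\max_{\mathbf{x}\in\mathcal{K}}\hat{U}(\mathbf{x},\mathbf{E},\mathbf{p})$ then admits a maximizer $\mathbf{x}^*$, which is unique by strict concavity.

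Next I would show that $\mathbf{x}^*$ is a GNE. Since $\mathcal{K}$ is polyhedral, no Slater-type condition is needed and the KKT conditions are necessary and sufficient for this concave program: the stationarity condition in coordinate $i$ reads $E_i-x_i^*+p_i-\lambda+\nu_i^--\nu_i^+=0$, where $\lambda\ge0$ is the multiplier of the coupling constraint (the \emph{same} scalar for every $i$) and $\nu_i^\pm\ge0$ are the multipliers of the private bounds. But this is precisely the stacked KKT system of the $N$ individual best-response problems $\max_{x_i}\hat{U}(x_i,\mathbf{x}_{-i}^*,\mathbf{E},\mathbf{p})$ subject to $0\le x_i\le E_i$ and $x_i\le E_{\text{PG}}-\sum_{j\ne i}x_j^*$, with a \emph{common} price $\lambda$ on the shared resource; this is exactly the variational equilibrium of the jointly convex GNEP~\cite{2011-CWWWC_Arganda}. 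Thus $x_i^*$ is a best response to $\mathbf{x}_{-i}^*$ for every $i$, so $\mathbf{x}^*$ is a GNE. Social optimality is then immediate: because $\hat{U}=\sum_i U(x_i,E_i,p_i)$ and $\mathbf{x}^*$ maximizes $\hat{U}$ over $\mathcal{K}$, no feasible choice of offered energies can yield a larger aggregate EU utility; combined with the unique optimal PG price vector guaranteed by the strict convexity of \eqref{net-cost}~\cite{2004-Book_BOYD}, this delivers the claimed socially optimal GNE (and hence a socially optimal NSE of $\Lambda$).

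The step I expect to be the main obstacle is the rigorous identification of the maximizer of the common potential with a genuine GNE of the \emph{coupled} game --- one must justify that the multiplier on the shared constraint can be taken identical across all EUs. This is exactly what the joint-convexity / VE theory provides, so I would invoke that result rather than re-derive it; the remaining work is the routine verification of concavity, compactness of $\mathcal{K}$, and the affine constraint qualification.
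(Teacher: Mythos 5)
Your proposal is correct and lands on the same object as the paper --- the variational equilibrium of the jointly convex GNEP, identified with the global maximizer of $\hat{U}$ --- but it traverses the argument in the opposite direction and with a different existence lemma. The paper formulates the GNEP as VI$(\mathbf{X},\mathbf{F})$, computes the pseudo-gradient \eqref{eqn-pseudo-grad}, observes that its Jacobian is the identity (hence positive definite, hence $\mathbf{F}$ strictly monotone), invokes the VI theorem that a strictly monotone VI on a convex set admits a unique VE, and only then notes that by joint convexity this VE is the global maximizer of \eqref{utility-joint}, which gives social optimality. You instead start from the optimization side: strict concavity of $\hat{U}$ (Hessian $-\mathbf{I}_N$) plus compactness and convexity of the joint feasible set give a unique maximizer by Weierstrass, and the stacked KKT system with a common multiplier on the shared constraint $\sum_i x_i\le E_{\text{PG}}$ certifies that this maximizer is the VE, hence a GNE; social optimality is then free since the potential \emph{is} the aggregate utility. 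Your route is more elementary and self-contained (it needs only finite-dimensional convex programming and the standard KKT characterization of the VE, not the monotone-VI existence theory), and it makes the "socially optimal" claim transparent rather than a citation; the paper's route buys the strict-monotonicity fact, which it reuses later to justify the equal slack variables $\mu_i=\mu$ at the VE and the convergence of the S-S projection algorithm. One cosmetic caveat: you add the private bounds $0\le x_i\le E_i$ to get compactness of $\mathcal{K}$; the paper's constraint \eqref{constraint-1} supplies $x_i\le E_i$ and nonnegativity is implicit, so this is consistent, but it is worth stating since Weierstrass is doing the existence work in your version.
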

\begin{proof}
First, we note that the proposed GNEP is a jointly convex GNEP due
to the coupled constraint \eqref{constraint-1}, and hence, the GNEP
can be formulated as a variational inequality (VI) problem
VI($\mathbf{X,F}$)~\cite{2011-CWWWC_Arganda}, which can be used
to determine a vector $\mathbf{x}^*\in\mathbf{X}\in\mathbb{R}^n$
such that $\langle \mathbf{F}(\mathbf{x}^*),\mathbf{x-x^*}\rangle
\geq 0,\forall x\in\mathbf{X}$. Here, $\mathbf{X}$ is the vector of
strategies of all EUs satisfying \eqref{constraint-1} and,
from~\cite{2011-CWWWC_Arganda}
\begin{align}
\mathbf{F}=-\left(\nabla_x U(x_i,E_i,p_i)\right)_{i=1}^{N}.
\end{align}
The solution of a VI is a variational equilibrium (VE), which is the
socially optimal solution among other
GNEs~\cite{2011-CWWWC_Arganda}. To this end, our main concern it to
study the existence of the VE of the GNEP, and thus, check the
optimality of the solution. Hereinafter, we will use GNEP and VI
interchangeably, and VE to refer to both the VE and GNE of the GNEP. Now,
the pseudo-gradient of the joint utility function
$\hat{U}$ in \eqref{utility-joint} is~\cite{2011-CWWWC_Arganda}
\begin{equation}
\mathbf{F}=\left[\begin{array}{c} x_1-E_1-p_1\\x_2-E_2-p_2\\\vdots\\
x_N-E_N-p_N\end{array}\right],\label{eqn-pseudo-grad}
\end{equation}
and the Jacobean of \eqref{eqn-pseudo-grad}, $\mathbf{JF}$, is an
identity matrix. Hence, $\mathbf{JF}$ is positive definite on
$\mathbf{X}$, and consequently, $\mathbf{F}$ is strictly monotone.
Therefore, VI$(\mathbf{X,F})$ admits a unique
VE~\cite{2007-J4OR_Facchinei}. Moreover, due to the joint convexity
of the proposed GNEP, the unique VE is also the global unique
maximizer of~\eqref{utility-joint}~\cite{2007-J4OR_Facchinei}, and
thus, the existence of a \emph{socially optimal} VE of the proposed
GNEP is proved.
\end{proof}
From Theorem~\ref{theorem-1}, we can further conclude that the
proposed NSG possesses a socially optimal NSE.
\subsection{Algorithm}
In this section, we propose a distributed algorithm for the
considered NSG to reach the socially optimal NSE. The algorithm is
implemented based on the fact that $\mathbf{F}$ is strongly
monotone, and thus, the slack variables $\mu_i=E_i-x_i+p_i,~\forall
i\in\mathcal{N}$ all possess the same value $\mu_i=\mu$ at the
VE~\cite{2007-J4OR_Facchinei}. We assume that there is limited
communication between the PG and EUs via the two-way communications
of the SG, and thus, the PG can inform\footnote{This can be done, for example, by using a
single bit.} the EU $i$ if its offered energy $x_i$ is beyond the VE
in response to $p_i^*$ (by checking whether $\mu_i=\mu,~\forall i$).
We use a hyperplane projection method for the VI, which is the
simplest solution method for a monotone VI~\cite{2011-CWWWC_Arganda},
and its solution is always guaranteed to converge to a non-empty
VE~\cite{2011-CWWWC_Arganda}. Since the PG's optimization problem is
strictly convex and has a unique solution, consequently, the
proposed algorithm also guarantees that the NSG will reach its socially
optimal NSE.

The algorithm is executed in two steps as shown in
Algorithm~\ref{algorithm-1}. As a hyperplane projection method, we
use the S-S method~\cite{2011-CWWWC_Arganda,1999-JSIAMJCP_Solodov},
which is based on a geometrical interpretation and uses two
projections per iteration. Let $x^l$ be the current approximation
of the solution of VI$(\mathbf{X,F})$. In the S-S method, first the
projection $r(x^l)=\text{Prj}_{\mathbf{X}}\left(x^l-F(x^l)\right)$
is computed\footnote{$\text{Prj}_{\mathbf{X}}(k)=\arg \min\lbrace
\|\omega-k\|,\omega\in\mathbf{X}\rbrace,~\omega\in\mathbb{R}^n$.},
and then a point $z^l$ is determined in the line segment between $x^l$ and
$r(x^l)$, using an Armijo-type search~\cite{2011-CWWWC_Arganda}.
Then, $x^l$ is separated from the other solution $x^*$ of the problem
via the hyperplane $\delta M^l=\lbrace x\in\mathbb{R}^n|\langle
F(z^l),x-z^l\rangle =0\rbrace$. Now, as soon as the hyperplane is
constructed, $x^{l+1}$ is computed in the next iteration onto the
feasible set of $\mathbf{X}$ with the hyperspace $\delta M^l=\lbrace
x\in\mathbb{R}^n|\langle F(z^l),x-z^l\rangle \leq 0\rbrace$ which
contains the solution set~\cite{1999-JSIAMJCP_Solodov}. The details
of the S-S method can be found in~\cite{1999-JSIAMJCP_Solodov}.
%
\begin{algorithm}[t]
\caption{Algorithm to reach a socially optimal NSE}
\begin{algorithmic}
\scriptsize %
\STATE \textbf{\underline{\emph{Step-1}}}%
\STATE\hspace{1mm}~(i)- The PG announces $E_{\text{PG}}$ and $P_r$.%
\STATE\hspace{1mm}~(ii)- EU $i$ calculates $p_i=\frac{P_r}{N}$,
and plays a GNEP to determine VE energy $x_i$, for $p_i$, using the S-S method~\cite{1999-JSIAMJCP_Solodov}.%
\INDSTATE\textbf{\emph{S-S~method:}}%
\INDSTATE a) At iteration $l$, EU $i\in\mathcal{N}$ computes the hyperplane projection $r(x_i^l)$ and updates $x_{i}^{l+1}=r(x_i^l)$.%
\INDSTATE b) The EU checks: if $r(x_i^l)=0$\\%
\hspace{0.5cm}\textbf{if} $r(x_i^l)=0$\\%
\hspace{1cm}a) The EU chooses the energy $x_l$ to offer to the PG.\\%
\hspace{0.5cm}\textbf{else}\\%
\hspace{1cm} a) the EU $i$ determines the hyperplane $z_i^l$ and the half space $M_i^l$ from\\%
\hspace{0.5cm} the projection.\\%
\hspace{1cm} b) the EU updates the amount $x_i^{l+1}$ from the projection of its previous\\%
\hspace{0.5cm} energy $x_i^l$ on to $X\cap M_i^l$ and chooses to submit to the PG.\\%
\hspace{0.5cm}\textbf{end if}%
\STATE (iii). The EU $i$ determine $\mu_i=E_i-x_i+p_i$ and submits it to the PG. %
\STATE (iv). The PG checks $\mu_i,~\forall i\in\mathcal{N}$. \\%
\hspace{0.5cm}\textbf{if} $\mu_1=\mu_2=...=\mu_N=\mu$\\%
\hspace{1cm} The PG determines the VE energy vector $\mathbf{x}$ of all the EUs in the network\\%
\hspace{0.5cm}  in response to $p_i=p,~\forall i\in\mathcal{N}$. \\%
\hspace{0.5cm}\textbf{else}\\%
\hspace{1cm} The PG directs the EUs to repeat (ii) and (iii).\\%
\hspace{0.5cm}\textbf{end if}%
\STATE\hspace{1mm}~(v)- EUs submit their offered VE energies for $p_i~\forall i\in\mathcal{N}$ to the PG.%
\STATE\textbf{\underline{\emph{Step-2}}}%
\STATE\hspace{1mm}~(vi)- The PG optimizes \eqref{net-cost} using a
standard convex optimization technique~\cite{2004-Book_BOYD}, and determines $p_i=p_i^*~\forall i\in\mathcal{N}$.%
\STATE\hspace{1mm} \textbf{The optimized price vector $\mathbf{p^*}$ is obtained}.%
\STATE\hspace{1mm}~(vii)- EU $i$ receives the optimized price $p_i^*$ as offered by the PG.%
\STATE\hspace{1mm}~(viii)- Each EU $i$ again plays a GNEP for the offered price $p_i^*$, following steps (ii),(iii), and (iv) so as to
determine the VE energy $e_i^*$ to supply to the PG in response to $\mathbf{p}^*$.
\STATE\hspace{1mm} \textbf{The VE energy vector $\mathbf{e^*}$ for
$\mathbf{p^*}$ is obtained}.%
\STATE\emph{\textbf{The NPG reaches the NSE.}}%
\end{algorithmic}
\label{algorithm-1}
\end{algorithm}
\section{Numerical Experiments}\label{numerical-simulation}
We simulate the proposed energy management scheme by considering a
number of different EUs, where each EU represents $20$ similar
energy consumers connected via an aggregator. The available SE of
each EU is assumed to be a uniformly distributed random variable in
the range of $\left[64,240\right]$ kilowatt hour (kWh), and thus
covers both the lowest battery capacity of a group of solar panels
(3.6 kWh per panel) and the highest battery
capacity of a wind turbine group (12.25 kWh per
turbine). The total price per unit energy is assumed to be $175$ US
cents\footnote{For $5$ EUs, the average price per unit energy is
$35$ cents/kWh~\cite{2012-solarchoice}.}, and
$(p^{\text{min}},p^{\text{max}})=(8.45, 175)$ cents, unless stated
otherwise. All the statistical results are averaged over random values of the EUs' capacities using $100$ independent
simulation run.
\begin{figure}[t!]
\centering
\includegraphics[width=0.8\columnwidth]{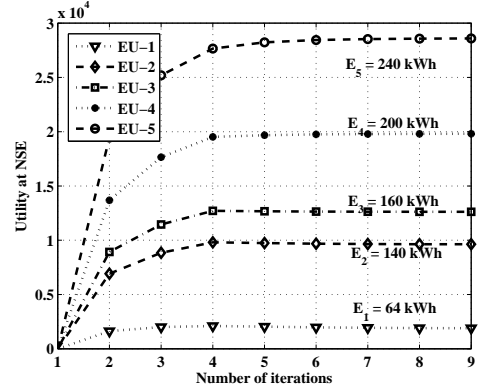}
\caption{Convergence of the utility to NSE.} \label{fig-utility-SE}
\end{figure}
\begin{figure}[t!]
\centering
\includegraphics[width=0.8\columnwidth]{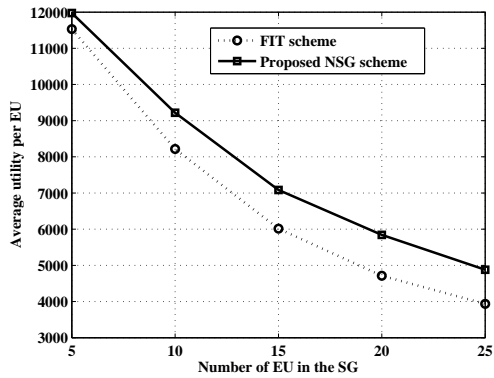}
\caption{Effect of the network size on the average utility per EU.}
\label{fig-utilitywith-EU}
\end{figure}
\begin{figure}[t!]
\centering
\includegraphics[width=0.8\columnwidth]{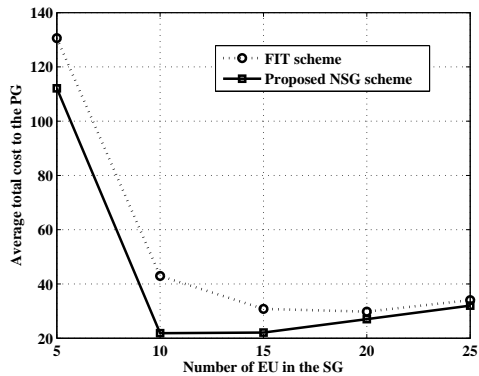}
\caption{Effect of the network size on the average total cost of the
PG.} \label{fig-costwith-EU}
\end{figure}

In Fig.~\ref{fig-utility-SE}, the convergence of the utility
achieved by each EU from selling its energy is shown to reach the
NSE. We consider that $5$ EUs are connected to the PG, and as shown
in the figure, the utility achieved by each EU reaches its NSE after
the $7^{\text{th}}$ iteration. Importantly, the EU with a larger
amount of SE has a higher utility, which is due to the fact that it
can sell more energy to the PG, and hence, it is being paid more.
Consequently, its utility is larger.

Next, we demonstrate the effectiveness of the proposed scheme by
comparing its performance with a standard feed-in tariff scheme
(FIT)~\cite{2012-solarchoice}. We note that an FIT is an incentive
based energy trading scheme which is designed to increase the use of
renewable energy systems providing power to the main grid when it is
required. A higher tariff is paid to the energy producers to
encourage them to take part in energy trading. For comparison, here
we assume that the contract between the EUs and the PG is such that
the EUs are capable of providing the PG with the required energy
with a tariff of $60$ cents per kWh~\cite{2012-solarchoice}. In
Fig.~\ref{fig-utilitywith-EU}, we show the performance comparison
between the proposed and FIT schemes for the average achieved
utility per EU as the number of EUs varies in the SG. As shown in
this figure, an increase in the number of EUs subsequently increases
the freedom of the PG to buy its energy from more EUs, and hence,
the amount of energy sold by each EU decreases. As a result, average
utility decreases for both the schemes. However, the proposed NSG,
due to its capability of choosing an optimal energy for maximizing
the EUs' benefits, always shows a considerable improved performance
over the FIT scheme in terms of average utility per EU. As seen in
Fig.~\ref{fig-utilitywith-EU}, the utility per EU for the proposed
NSG is $1.5$ times, on average, better than the utility achieved by
an FIT scheme.

The effect of the number of EUs on the average total cost to the PG
is shown in Fig.~\ref{fig-costwith-EU} for both the NSG and the FIT
schemes for the same total price per unit of energy $P_r$. For a
fixed $E_{\text{PG}}$, increasing the number of EUs from $5$ to $15$
allows the PG to buy its energy from more EUs, and thus, enables the
PG to pay a cheaper rate. Consequently, the total cost incurred by
the PG decreases. However, to keep all the EUs participating, the PG
needs to pay the minimum mandatory price $p^{\text{min}}$ to each
EU. Thus, as the number of EUs increases from $20$ to $25$, the
total cost to the PG increases due to the mandatory payment to a
large number of EUs. Fig.~\ref{fig-costwith-EU} shows that the
proposed scheme has significantly lower total cost to the PG at
small network sizes, e.g., for $10$ EUs the average total cost for
the proposed scheme is half the total cost incurred by the FIT
scheme. However, as the network size increases, the average total
cost for the proposed NSG becomes closer to the FIT scheme.  In
fact, as the network size increases, the PG needs to optimize its
price for a large number of EUs while maintaining the minimum
payment. Hence, due to the constraint \eqref{constraint-2}, a large
number of EUs causes the PG to choose a price close to its minimum
payment and consequently, the total cost for the proposed NSG
becomes closer to that of the FIT scheme.

\section{Conclusion}\label{conclusion}
In this paper, we have studied a demand-supply balance technique by
prioritizing consumer benefits, and have proposed a Stackelberg
game which leads to a socially optimal Stackelberg equilibrium. We have
shown that the proposed scheme maximizes  the utility of the end
users at the solution of the game, and at the same time keeps the
total cost to the power grid to a minimum. We have
studied the properties of the game analytically including the existence and the
social optimality of the studied scenario. The effectiveness of the
scheme has been demonstrated with considerable performance
improvement when compared to a standard feed-in tariff scheme.\vspace{-0.46cm}

\end{document}